\numberwithin{equation}{section}
\newcommand{\R}{\mathbb{R}}
\newcommand{\N}{\mathbb{N}}
\newtheorem{theorem}{Theorem}[section]
\newtheorem{definition}[theorem]{Definition}
\newtheorem{problem}[theorem]{Problem}
\title{\bf
\rule{\linewidth}{1pt}
Can the PageRank centrality be manipulated to obtain any desired ranking?  \rule{\linewidth}{1pt}
} 
\renewcommand\@date{{%
  \vspace{-2\baselineskip}%
  \large\centering
  \begin{tabular}{@{}c@{}}
    Gonzalo Contreras-Aso\footnote{Corresponding author: gonzalo.contreras@urjc.es} \textsuperscript{$,\,\sharp$,$\flat$}
  \end{tabular}%
  \quad and \quad
  \begin{tabular}{@{}c@{}}
    Regino Criado\textsuperscript{$\sharp$,$\flat$,$\natural$}
  \end{tabular}%
  \quad and\quad
  \begin{tabular}{@{}c@{}}
    Miguel Romance\textsuperscript{$\sharp$,$\flat$,$\natural$}
  \end{tabular}

  \bigskip

{\it \normalsize
  \textsuperscript{$\sharp$}Departamento de Matem\'atica Aplicada, Ciencia e Ingenier\'{\i}a de los Materiales y Tecnolog\'{\i}a Electr\'onica, Universidad Rey Juan Carlos, 28933 M\'ostoles (Madrid), Spain\par
  \textsuperscript{$\flat$}Laboratory of Mathematical Computation on Complex Networks and their Applications, Universidad Rey Juan Carlos, 28933 M\'ostoles (Madrid), Spain\par
  \textsuperscript{$\natural$}Data, Complex networks and Cybersecurity Research Institute, Universidad Rey Juan Carlos, 28028 Madrid, Spain
}

  \bigskip

  \today
}}
\begin{document}

\maketitle

\begin{abstract}
The significance of the PageRank algorithm in shaping the modern Internet cannot be overstated, and its Complex Network theory foundations continue to be a subject of research. In this article we carry out a systematic study of the structural and parametric controllability of PageRank's outcomes, translating a spectral Graph Theory problem into a geometric one, where a natural characterization of its rankings emerges. Furthermore, we show that the change of perspective employed can be applied to the biplex PageRank proposal, performing numerical computations on both real and synthetic network datasets to compare centrality measures used.
\end{abstract}


\section{Introduction}\label{sec1:introduction}

Almost 25 years have passed since the PageRank algorithm was devised \cite{Page-pagerank-1998}. It brought about two revolutions: on the industry side, it shaped the Internet landscape making Google the giant it is today. On the academic side, it triggered an enormous cascade of studies, interested in understanding its properties, its limitations and its implications \cite{Bianchini-Inside-2005, Langville-deeper-2003, Boldi-pagerank-2005, Boldi-pagerank-2009, Langville-Google-2006}. Furthermore, it has been shown to be relevant beyond its original goal of webpage ranking: indeed, it has found applications in very diverse fields such as biology, engineering and even literature (see \cite{Gleich-pagerank-2015} for an extensive survey).

The academic research poured in the PageRank algorithm coincided with both the development of the interdisciplinary field of Complex Networks and the advent of accessible computing resources. This allowed for both theoretical and numerical results \cite{Langville-Google-2006,Gleich-pagerank-2015, Boldi-pagerank-2005, Bianchini-Inside-2005} that have many direct applications in the economic and social world, since the PageRank algorithm is in the core of the most popular web engines. One of these direct implications in marketing and economics is the so called {\sl Search Engine Optimization Problem}  or  {\sl Web Positioning Problem} \cite{chaffey2009search,ledford2015search,henzinger2001hyperlink}, that tries to find out the strategies that can be performed in a network in order to the maximize PageRank of a specific node (or set of nodes). This theoretical problem has huge real applications with severe economic impact in the global markets. In our nowadays on-line world, for any company not only it is crucial to be present in the WWW, but also to appear highest in the ranking of any web engine; the web-master of a site is thus interested in increasing the PageRank of website by connecting it properly with other webpages, since the highest the ranking of a website the biggest economic revenue the corresponding company gets \cite{olsen2009future}. This major Search Engine Optimization Problem belongs to a more general class of problems related with centrality measures of networks: the control of a centrality measure. This general problem deals with the ability to modify at our wish the centrality of a specific node (or set of nodes) of a given network by slightly changing the link structure of the network or by modifying the intrinsic parameters of the centrality measure. Note that the  Search Engine Optimization problem is related to the control of centrality measures by changing the link structure, while in this paper we will focus on the control by modifying the intrinsic parameters of the centrality measure. 

While Search Engine Optimization Problem has attracted broad attention by the scientific community, the control of a centrality measure by modifying its intrinsic parameters has been less considered despite the fact that it also has some potential real applications, since, for example, it gives valuable information for a web engine administrator about how to modify the ranking of a webpage (or a set of webpages) simply by tuning the parameters of the centrality measure that is behind his web searcher.  It is well known that most of the web engines work with algorithms that modify their ingredients in order to improve the results \cite{Langville-Google-2006}, so a detailed analysis of the influence and sensibility of each parameter of these centrality measures must be considered. In particular, in the case of PageRank centrality, there are two parameters of this measure to be considered\cite{Langville-Google-2006, Gleich-pagerank-2015}: the {\sl damping factor} $\alpha\in(0,1)$ and the {\sl personalization vector} $\bm{v}\in\R^n$. The damping factor has been extensively studied, discussed and interpreted (see e.g. \cite{Boldi-pagerank-2005}), but the role of  personalization vector has always remained understudied \cite{Garcia-localization-2012}. 

In this article, we attempt to shed some light on the relationship between the centrality vectors resulting from PageRank and the choice of personalization vectors. This is actually intertwined with the subject of centrality control in Complex Networks\cite{Latora-controlling-2012}: probing the space of possible centrality vectors with suitable changes in either the underlying graph or the centrality measure. There are already a number of studies discussing the possibility of increasing a node's own PageRank score\cite{Avrachenkov-effect-2006, deKerchove-maximizing-2008, Olsen-maximizing-2010, Carchiolo-long-2018}, as well as some advances regarding PageRank competitors\cite{Garcia-localization-2012} (nodes whose relative ranking position depend on the value of the algorithm's parameters). While these approaches are interesting on their own, they focus on specific nodes and their scores or rankings. In this work we discuss centrality vectors and their rankings as a whole, without reference to individual improvements or detriments.

This paper is structured as follows: In Section~\ref{sec2:notation} we establish some notation and basic graph-theoretical concepts, as well as introduce the terminology that will be used throughout the paper. Section~\ref{sec3:monoplex} presents the mathematical definition of the PageRank algorithm and then explore some routes towards controlling its resulting centrality, with either structural or parametric changes. Theoretical results connecting PageRank and personalization vectors are proven, and network datasets are then used for numerical comparisons and discussion of the implications. In Section~\ref{sec4:biplex} we apply the same techniques to the case of the biplex PageRank~\cite{Pedroche-biplex-2016}, an alternative centrality measure based on the PageRank algorithm. We conclude with a discussion and comparison between the results obtained with each of the different approaches.

\section{Preliminaries and notation}\label{sec2:notation}

Let $G=(V,E)$ be a graph (irregardless of directionality or weights), with node set $V=\{1,...,n\}$, for some $n\in \N$ and adjacency matrix $A=(a_{ij})$ such that
\begin{equation}
    a_{ij} = \begin{cases}
    w_{ij} & \textrm{if } (i,j) \in E,\\
    0 & \textrm{otherwise,}
    \end{cases}
\end{equation}
where $w_{ij}$ is the weight corresponding to edge $(i,j)$, by default $w_{ij}=1$ if unweighted. 

The in-degree (number of incoming links) and out-degree (number of outgoing links) of node $i\in V$ are defined as
\begin{equation}
    \deg_{in}(i) = \sum_{j=1}^n a_{ji},\quad \deg_{out}(i) = \sum_{j=1}^n a_{ij},
\end{equation}
respectively. For undirected graphs we clearly have $\deg_{in}(i)=\deg_{out}(i)$. Nodes in a graph with no outgoing links, i.e. such that $\deg_{out}(i)\neq0,$ are called \textit{dangling} nodes. As will be pointed out later, only networks without dangling nodes will be considered, since similar results can be obtained for the general settings simply by using some standard techniques \cite{Garcia-localization-2012}.

By using these definitions, we can introduce the first ingredient of PageRank, the row-normalized adjacency matrix $P$, which is defined as:
\begin{equation}
    P = (p_{ij}) = \left(\frac{a_{ij}}{\deg_{out}(i)}\right)\in M_{n\times n} (\R).
\end{equation}

In the theory of Markov processes (i.e. memory-less stochastic processes) this matrix is referred to as the ``transition matrix'' of the random walker, as its component $p_{ij}$ provides the probability of transitioning from state $j$ to state $i$. Due to the intrinsic random walk nature of the PageRank algorithm (as discussed in \cite{Rosvall-maps-2007, Lambiotte-ranking-2012}), we will use that notation from now on.

We will denote vectors as $\bm{v}= (v_1,...,v_n)^T\in \R^n$ and the canonical basis of $\R^n$ as $\{\bm{e}_1,...,\bm{e}_n\}$. The vector with 1 in all components will be $\bm{e}=(1,...,1)^T$. $I_n$ will denote the identity matrix. Finally, we will say that a vector is positive if it is positive components-wise, and we will say that it has unit norm if its 1-norm is equal to 1.

\newpage

\section{Standard PageRank}\label{sec3:monoplex}

The best way to introduce the PageRank algorithm\cite{Page-pagerank-1998} is through the lens of a \textit{random walker with random teleportation. Let us forget about the teleportation step for a while and} consider a random walker on a network $G$: starting at node $i$, at each step it will choose an outlink from those available in its current node, with probability proportional to the weight of each outlink. This is a Markovian process, whose steady state gives a measure of the ``centrality'' of each node. In other words, the more the walker passes through node $i$, the more important or central it is.

The PageRank algorithm corresponds to a \textit{personalized} version of this centrality measure, consisting of a biased random walker: with probability $\alpha$ it will follow the previously described rules of standard random walks and with probability $1-\alpha$ it will ``teleport'' or ``jump'' to a random node in the network, with associated probabilities given by a distribution $\bm{v}$, sometimes called the teleportation vector. The mathematical formulation of this idea is the following:

\begin{definition}[PageRank vector]
Let $G$ be a graph with no dangling nodes, $v$ a positive, unit norm vector and $\alpha\in(0,1)$. Then, the PageRank vector of $G$ with damping factor $\alpha$ and personalization vector $\bm{v}$ is the only positive, unit norm vector (i.e. $\bm{\pi}>0$, $|\bm{\pi}|_1=1$) such that satisfying 
\begin{equation}\label{eq-PageRank}
    \bm{\pi}^T = \bm{\pi}^T(\alpha P + (1-\alpha) \bm{e} \bm{v}^T),
\end{equation}
where $P$ is the transition matrix of the graph.
\end{definition}

Note that existence and uniqueness of $\bm{\pi}$ are guaranteed by the classic Perron Theorem, as $\alpha P +(1-\alpha)e v^T$ is a positive matrix (see for example \cite{Meyer-matrix-2001, Langville-Google-2006}).

In what follows we will restrict ourselves to graphs with no dangling nodes. Were there any, they can be dealt with in the usual way\footnote{When there are dangling nodes involved, one can resort to the standard trickery of substituting $P \rightarrow P + \bm{d}^T\bm{u}$, where $\bm{d}\in \R^n$ is the distribution of dangling nodes and $\bm{u}\in \R^n$ is the distribution of imposed outgoing links from them (see, for instance \cite{Garcia-localization-2012}).}. This does not affect the results discussed here, and we will thus omit it for the sake of clarity.

We are interested in understanding the conditions under which an arbitrary stochastic vector can be set to be the PageRank centrality of a given graph. We can state this more formally:
\begin{problem}[PageRank centrality control] \label{pro-PRcontrol}
Can we modify the graph $G=(V,E)$ or the components of the PageRank measure (damping factor or personalization vector) such that an arbitrary positive, unit norm vector $\bm{\pi}_0$ is the PageRank vector?
\end{problem}

Changing the structure of the graph in some way (adding/removing edges, changing weights) would be considered as a \textit{structural} change, whereas changing the parameters of the PageRank measure, such as the damping factor or its personalization vector, would be a \textit{parametric} change. 

In the context of the Eigenvector centrality it was proven~\cite{Latora-controlling-2012} that by a rather mild structural change as changing edge weights, one is able to fully fix the resulting centrality vector at will, so long as the network is directed and strongly connected. In the present case, where we instead deal with the PageRank centrality, things are not that simple, due to the row-normalization of the adjacency matrix: the construction of $P$ normalizes out any weight placed on out-edges coming from nodes with out-degree equal to 1. The simplest way to see this is considering directed rings, as in Figure~\ref{fig:ring-graph}.

\begin{figure}[h]
    \centering
    \includegraphics[width=0.35\textwidth]{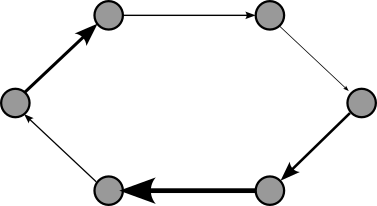}
    \caption{Simple example of a network (the directed cycle $C_6$) whose PageRank centrality is unaffected by any modification of the edge weights.}
    \label{fig:ring-graph}
\end{figure}

We could consider controlling the centrality by means of other types of structural changes, such as adding nodes, rewiring edges, etc. However those are considerably more drastic modifications, and go out of the scope of this paper. Instead, we will now focus on parametric changes, i.e. modifications in the parameters of the centrality measure.

\subsection{Constraints on the personalization vector}

It is clear that suitable adjustments of the damping factor $\alpha$ and the personalization vector $v$ will be needed in order to fix the PageRank centrality of the given network $G$ (see, for example \cite{Boldi-pagerank-2005,Garcia-localization-2012}. What we attempt to do is quantifying the balance between the adjustment of both parameters. In other words, we want to understand what ranges of $\alpha$ provide the desired centrality vector for suitable $v$.   

By operating with \eqref{eq-PageRank} it is straightforward to obtain the following formula \cite{Boldi-pagerank-2005}
\begin{equation}\label{eq-boldi}
    \bm{\pi}^T(I_n - \alpha P) = (1-\alpha) \bm{v}^T.
\end{equation}

Traditionally, this equation can be viewed as an equation for $\bm{\pi}$ given $\alpha$, $\bm{v}$ and $P$. However, we can also view it as an equation for $\bm{v}$ given $\alpha$, $\bm{\pi}$ and $P$:
\begin{equation}\label{eq-personalization}
    \bm{v}^T = \frac{1}{1-\alpha}\bm{\pi}^T(I_n - \alpha P).
\end{equation}

This equation tells us which personalization vector is required to obtain a desired PageRank vector, for a fixed network and damping factor. This raises a question: can we always find such non-negative personalization vector that gets a prescribed PageRank centrality? This natural question is summarized in the following problem:

\begin{problem}[Centrality control via personalization vector] \label{pro-centcontrol}
Given a graph $G$, a damping factor $\alpha\in(0,1)$ and a positive, unit norm vector $\bm{\pi}_0$, does it always exists a positive, unit norm $\bm{v}$ such that the $\bm{\pi}_0$ is the PageRank outcome?
\end{problem}

In other words: can any PageRank vector be set for a given graph and damping factor if we have control over the personalization vector used in the algorithm?

The answer is no, since there is no positive ($v_i>0,\, \forall i$) solution in some cases. Nevertheless, we can study the conditions under which $\bm{\pi}_0$ actually has an associated personalization vector $\bm{v}$ and the following result give a characterization of the existence of positive personalization vectors that give a prescribed PageRank centrality $\bm{\pi}_0$ in terms of the size of its components. 

\begin{theorem}[Existence of the personalization vector]\label{prop-existencev}
Given a graph $G$ and a positive, unit norm vector $\bm{\pi}_0$ then there exists a positive, unit norm personalization vector $\bm{v}$ such that $\bm{\pi}_0$ is the PageRank vector if and only if $\bm{\pi}_0^T \bm{e}_j > \alpha \bm{\pi}_0^T P \bm{e}_j$ for all $j$.
\end{theorem}

\begin{proof}
First we prove that \eqref{eq-personalization} leads to unit norm personalization vectors, since
\begin{align}
    |\bm{v}|_1 &= \bm{v}^T \bm{e} = \frac{1}{1-\alpha}\bm{\pi}_0^T (I_n - \alpha P) \bm{e} = \frac{1}{1-\alpha} \bm{\pi}_0^T (\bm{e}-\alpha P \bm{e}) \nonumber \\
    & = \bm{\pi}_0^T \bm{e} = |\bm{\pi}_0|_1 = 1,
\end{align}
where we used the row-stochasticity in $P\bm{e} = \bm{e}$. We now require that all of $\bm{v}$'s components are positive, so
\begin{equation}
    v_j = \bm{v} \bm{e}_j = \frac{1}{1-\alpha} \bm{\pi}_0^T(I_n -\alpha P)  \bm{e}_j > 0,
\end{equation}
which completes the proof. 
\end{proof}

It is also remarkable to point out that Theorem~\ref{prop-existencev} presents some analytical interplay between the damping factor and personalization vectors, since if we take a positive, unit norm  $\bm{\pi}_0$ and $0<\alpha\leq \min_j (\bm{\pi}^T_0\bm{e}_j)$ then it can be checked that for any graph without dangling nodes there exists a positive, unit norm personalization vector $\bm{v}$ such that $\bm{\pi}_0$ is the PageRank vector. In fact, if we consider a graph without dangling nodes, note that $P \bm{e_j}$ is the $j$-th column of $P$, that is
\begin{equation}
    P \bm{e_j} = \left(\frac{a_{1j}}{\deg_{out}(1)},\frac{a_{2j}}{\deg_{out}(2)},...,\frac{a_{nj}}{\deg_{out}(n)}\right)^T,
\end{equation}
 so  we have that $\bm{\pi}_0^T P \bm{e}_j \leq \bm{\pi}_0^T \bm{e} = 1$, since $0\leq a_{ij}/\deg_{out}(i)\leq 1$ and  hence, if we take $\alpha < \min_j (\bm{\pi}_0^T e_j)$, then
\begin{equation}
    \alpha \bm{\pi}_0^T P e_j \leq \alpha < (\bm{\pi}_0^T e_j) \quad \forall 1\leq j \leq n,
\end{equation}
hence there exists a personalization vector $\bm{v}$ such that $\bm{\pi}_0$ is the PageRank vector, simply by using Theorem~\ref{prop-existencev}.

\subsection{The Ranking control problem}\label{sec:ranking}

In this section we will analyze the centrality control problem by using Theorem~\ref{prop-existencev}, as seen in the previous section.

Centrality measures typically return a list (vector) of centrality scores: numbers between 0 and 1 specifying the importance of each node in the network with respect to the chosen measure. However, for most applications the actual score of a node is not relevant; instead what matters is its relative position with respect to the rest of the nodes. In other words, the ranking of nodes based on their centrality.

The subject of ranking control has remained fairly unexplored due to its technical complexity (as lifting the constraint of fixing concrete centrality vectors makes the problem harder to tackle), but in the PageRank case Theorem~\ref{prop-existencev} provides us with a valuable tool to investigate in this direction by using some techniques from convex geometry.  

Consider the following milder version of Problem~\ref{pro-centcontrol}, where we are now only interested in rankings rather than concrete PageRank vectors.

\begin{problem}[Ranking control via personalization vector]
Given a graph $G$, a damping factor $\alpha\in(0,1)$ and an ordering of the nodes (allowing for ties), does it always exists a positive, unit norm personalization vector $\bm{v}$ such that the PageRank outcome follows the prescribed order? 
\end{problem}

In order to study this problem we will now change the viewpoint of the discussion to a geometric one: consider the $n$-simplex defined as 
\begin{equation}\label{eq-simplex}
    \Delta_n = \{\bm{x} \in \R^n ,\quad \text{such that} \quad \bm{x}>0,\, |\bm{x}|_1=1 \}.
\end{equation}
This set represents the convex span of vectors $\{e_1,...,e_n\}$, and thus is the space of all possible personalization vectors and the space of all possible PageRank vectors of graphs with $n$ nodes. Therefore, we can understand equation \eqref{eq-boldi} as the following map from $\Delta_n$ to itself:
\begin{align}\label{eq-PRmap}
    \bm{\pi}(G, \alpha, \cdot):&\quad  \Delta_n \longrightarrow \Delta_n \nonumber \\
                          &\quad\,\, \bm{v} \lhook\joinrel\longrightarrow \bm{\pi}(G, \alpha, \bm{v})
\end{align}

\begin{figure}[t!]
    \centering
    \includegraphics[width=0.8\linewidth]{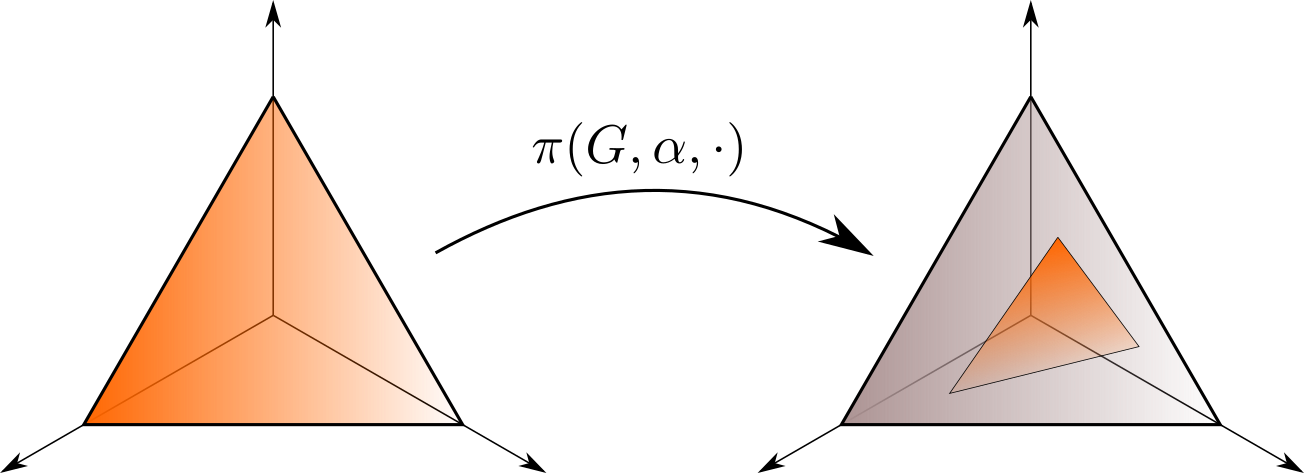}
    \caption{Depiction of the map $\bm{\pi}(G, \alpha, \cdot)$ for $n=3$.}
    \label{fig-map}
\end{figure}

This maps is injective (however in general it is not surjective) and linear in $\bm{v}$, so $\bm{\pi}(G, \alpha, \Delta_n)$ is a polytope (i.e. the convex hull of a finite number of points) in $\Delta_n\subset\R^n$. Figure~\ref{fig-map} illustrates this geometrical interpretation of $\bm{\pi}(G, \alpha, \cdot)$ in case $n=3$.


The key point in this geometric viewpoint is that we can associate each possible ranking to a portion of the simplex. If we consider the center point (barycenter) of the simplex $\Delta_n$, given by the normalization of $\bm{e}$, i.e. $\bm{e}_0\equiv\bm{e}/n=\sum_{i=1}^n \bm{e}_i/n$, then we can define the hyperplanes bisecting the simplex through the center $\bm{e}_0$ and any combination of $n-2$ vertices as
\begin{equation}
    \mathcal{H}_{n}^{i,j} = \left\{\sum_{\substack{k=0 \\ k\neq i,j}}^n \lambda_k \bm{e}_k \quad \text{such that} \quad \lambda_k  \in \R \right\}\subseteq\R^n.
\end{equation}

The relevance of this construction is that it provides us with a way to classify the points $\bm{\pi}\in \Delta_n$ according to their ranking. To see this, consider for instance the hyperplane $\mathcal{H}_4^{1,2}$. It can be identified as the region of ranking space where $c_1=c_2$, by definition. If we move away from it in the direction of $\bm{e}_2$ we will have $c_1<c_2$, and viceversa. 

In general, the ${n \choose 2}$ planes $\mathcal{H}_n^{i,j}$ uniquely determine the pairwise inequalities between components $i,j$ of the PageRank vector. The original simplex $\Delta_n$ is then divided into $n!$ regions (the number of permutations of the components of the PageRank vector), each of them determining a different ranking. A depiction of these regions for the $n=3$ case can be seen in Figure \ref{fig-rankingtriangle}.

\begin{figure}[h!]
    \centering
    \includegraphics[width=0.5\linewidth]{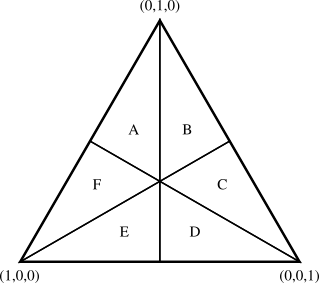}
    \caption{Different ranking regions in the $n=3$ case. For instance, if $\partial A$ denotes the (topological) boundary of $A\subseteq\Delta_n$ in the containing plane of $A$, then   $\bm{\pi} = (\pi_1,\pi_2,\pi_3)\in  A \setminus \partial A$ corresponds to $\pi_2 > \pi_1 > \pi_3$, while the intersection between triangles would lead to equal scores, e.g. $\pi \in B\cap C$ would correspond to  $\pi_2 = \pi_3 > \pi_1$ .}
    \label{fig-rankingtriangle}
\end{figure}

In this light, we can see that there is Ranking control if and only if
\begin{equation}\label{eq-rankingcenter}
\bm{e}_0= \frac{1}{n}\bm{e} \in \textrm{Im}(\bm{\pi}) \quad \text{and}\quad \bm{e}_0 = \frac{1}{n} \bm{e} \notin \partial \textrm{Im}(\bm{\pi}).
\end{equation}

The argument here is identical to that of the hyperplanes: $\bm{\pi}=\bm{e}_0$ is the point in ranking space where $c_1=c_2=...=c_n$. Given that all hyperplanes $\mathcal{H}_n^{i,j}$ pass through $\bm{e}_0$ by construction, all ranking regions are $\varepsilon > 0$ away from it.  Thus, moving $\varepsilon > 0$ away in any direction will lead to different rankings. 

This idea may be easier to visualize if we take into account Figure \ref{fig-map}. Notice that in that case the resulting triangle (right) contains points associated to any ranking (as shown in \ref{fig-rankingtriangle}). The condition necessary and sufficient for this to happen is for it to contain the centerpoint of the bigger triangle.

Next, we can give an analytical characterization of the existence of of a prescribed ranking of the nodes in terms of the relationship between the damping factor and the column sums of $P$, which is the analogous of Theorem~\ref{prop-existencev} but for the Ranking problem.

\begin{theorem}[Characterization of ranking control]\label{thm:ranking}
Given a graph $G$ and damping factor $\alpha=(0,1)$, then it is possible to obtain any ranking of the nodes under the PageRank if and only if
\begin{equation}\label{eq-necessary}
    \frac{1}{\alpha} > \max_j(\sum_{i=1}^n P_{ij}).
\end{equation}
\end{theorem}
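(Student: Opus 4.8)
The plan is to reduce the \emph{global} ranking-control statement to a single point, the barycenter $\tilde e=\frac1n e$, and then apply Theorem~\ref{prop-existencev} to that point. The excerpt has already supplied the geometric bridge: by the characterization \eqref{eq-rankingcenter}, ranking control holds precisely when $\tilde e$ lies in $\textrm{Im}(\pi)$ but not on its boundary, i.e. in the relative interior of the polytope $\pi(G,\alpha,\Delta_n)$. So my first step is simply to record that the whole problem collapses to the condition $\tilde e\in\textrm{Im}(\pi)\setminus\partial\,\textrm{Im}(\pi)$.

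Next I would translate this interior condition back to the personalization side. Because the map $\pi(G,\alpha,\cdot)$ is affine and injective in $v$, it carries the relative interior of $\Delta_n$ (the strictly positive vectors) bijectively onto the relative interior of its image, and the boundary onto the boundary. Consequently $\tilde e$ lies in the relative interior of $\textrm{Im}(\pi)$ if and only if its unique preimage is a strictly positive personalization vector, i.e. there exists $v\in\R^n_+$ with $|v|_1=1$ and $\pi(G,\alpha,v)=\tilde e$.

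Then I would invoke Theorem~\ref{prop-existencev} with the specific target $\pi_0=\tilde e=\frac1n e$. The theorem says such a strictly positive $v$ exists if and only if $\braket{\pi_0,e_j}>\alpha\,\pi_0^T P e_j$ for every $j$. The remaining work is a one-line computation: $\braket{\tilde e,e_j}=\frac1n$, while $\tilde e^T P e_j=\frac1n e^T P e_j=\frac1n\sum_{i=1}^n P_{ij}$ is $\frac1n$ times the $j$-th column sum of $P$. Substituting, the $n$ conditions read $\frac1n>\frac{\alpha}{n}\sum_{i=1}^n P_{ij}$, equivalently $\frac1\alpha>\sum_{i=1}^n P_{ij}$ for all $j$; taking the maximum over $j$ yields exactly \eqref{eq-necessary}. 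Chaining these equivalences proves both directions at once.

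I expect the only genuine subtlety to be the second step, namely verifying that the affine injective map $\pi(G,\alpha,\cdot)$ sends relative interior to relative interior and boundary to boundary, so that \emph{strictly positive preimage} and \emph{interior image point} are truly interchangeable; everything else is either already established in the excerpt or a direct substitution.
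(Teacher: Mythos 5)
Your proposal is correct and follows essentially the same route as the paper: the published proof simply substitutes $\pi_0=\tilde e$ into Theorem~\ref{prop-existencev} and carries out the same column-sum computation, relying on the preceding discussion of \eqref{eq-rankingcenter} for the reduction to the barycenter. The one place where you are more careful than the paper is your second step (an injective affine map carries the relative interior of $\Delta_n$ onto the relative interior of its image), which the paper only addresses informally in a follow-up remark about the boundary case; your version makes that small but genuine gap explicit and closes it.
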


\begin{proof}
 Using $\bm{\pi}_0=\bm{e}_0$ in Theorem~\ref{prop-existencev} yields 
\begin{equation}
    \bm{e}_0^T \bm{e}_j = \frac{1}{n} > \alpha \bm{e}_0^T P \bm{e}_j = \frac{1}{n}\alpha   \sum_{i=1}^n P_{ij},
\end{equation}
for all $1\leq j \leq n$,  which already gives us the characterization of the existence of a personalization vector that gives any prescribed ranking of nodes. By virtue of the aforementioned Theorem, we can also conclude it to be a sufficient condition for the existence of a personalization vector allowing for any desired ranking.
\end{proof}

Given that $\sum_i P_{ij}$ is the total probability that a random walker visits node $j$, this theorem can be interpreted as an upper bound for $\alpha$ in terms of the maximum of these total probabilities. This upper bound tells us that, provided we have $\alpha < 1/\max_j\sum_i P_{ij}$, we can always find any desired ranking with an appropriate choice of personalization vector. It is important to note that this is not a statistical result, in the sense that as long as there is one node targeted by many others with low out-degree, there will be almost no room for ranking control, regardless of the topology of the rest of the network. As we will see later, this is very reminiscent of the scale-free~\cite{Barabasi-emergence-1999} network paradigm: indeed, scale-free networks present these high in-degree nodes pointed to by low out-degree ones.

It is also remarkable to point out the fact that if we denote
\begin{equation}\label{eq:ControlPR}
\alpha_0=\frac 1{\max_j\sum_i P_{ij}},
\end{equation}
then $\alpha_0\in(0,1]$ is a measure of the {\sl controllability} of the PageRank in graph $G$, since the bigger $\alpha_0$ is the wider range of damping factors allow Ranking control of PageRank in $G$.


\subsection{Real network datasets}

Having found an network-specific upper bound for the value of the damping factor $\alpha$, which would allow the PageRank of the network to be ranking-controllable tinkering with the personalization vector, it is left for us to find out whether it is a hard or soft bound. 

The standard value considered for the damping is $\alpha = 0.85$ \cite{Langville-Google-2006}, whose interpretation in terms of Internet hyperlink networks is that of an surfer clicking on hyperlinks $\sim$8 times before losing interest and searching for something else; this value corresponds to constraining the maximum of the column sum of $P$ to around 1.17. This is clearly a very strict condition.

In fact, we have computed the maximum of the column sums of $P$ for a variety of networks\footnote{It should be noted that in order to perform this computation we had to deal with the issue of the dangling nodes, which we glossed over at the beginning of this section. We deemed adding a single, random connection from each dangling node to another, non-danging node as the simplest, least intrusive way to remove this issue.}, publicly available from different Internet sources (all fetch from the KONECT network repository \cite{KONECT-datasets} and the CASOS network repository \cite{CASOS-datasets}). We can extract the maximum value of the damping factor $\alpha$ which would enable us to have ranking control over each network's PageRank rankings. This is shown in Figure \ref{fig:real-datasets-alpha}. 

\begin{figure*}[ht!]
    \centering
    \includegraphics[width=1\textwidth]{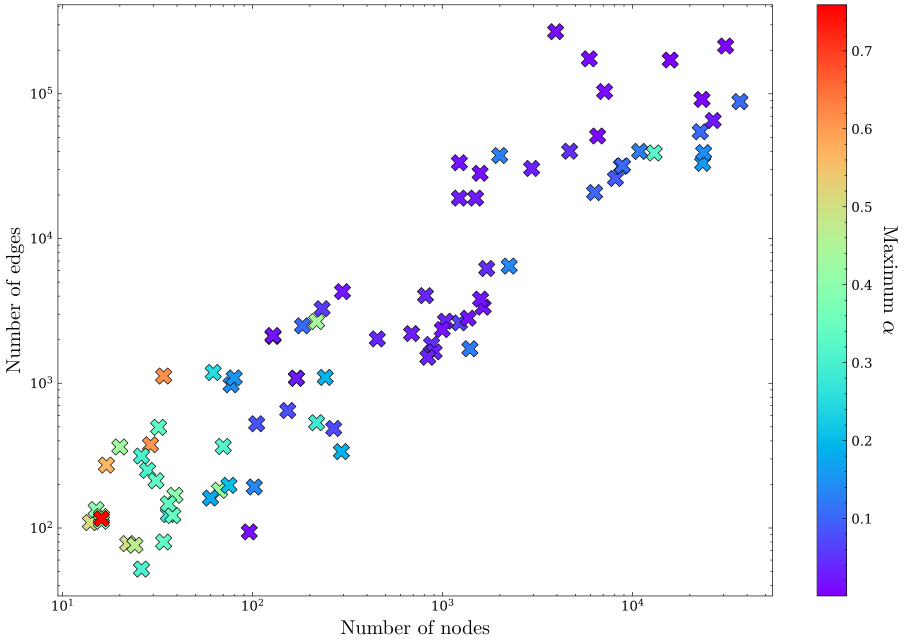}
    \caption{Scatter plot showing the number of edges against the number of nodes for 84 different real networks obtained from the KONECT network repository \cite{KONECT-datasets} and the CASOS network repository\cite{CASOS-datasets}, with datapoints colored based on the maximum value of $\alpha$ providing ranking control.}
    \label{fig:real-datasets-alpha}
\end{figure*}

As expected from the above discussion, the maximum values of the damping factor are generally small compared to the standard $\alpha=0.85$, regardless of the network size. There are a couple of exceptionally high values, but still lower than such value. We see, on the other hand, that the smaller the network the more controllable it is. This can also be understood from Theorem~\ref{thm:ranking}: a higher number of nodes means that the maximum column sum of $P$ is likely to be higher (specially due to the number of edges growing also linearly with the number of nodes), hindering controllability.

While our results are of a theoretical nature, and thus are not related to any specific implementation or application of PageRank, it might also be interesting to address the implications of this bound in some specific use cases of PageRank (and more concretely, understanding the teleportation vector in them).
\begin{itemize}
    \item World Wide Web and similar data: Here the purpose of PageRank is mainly identifying websites of interest for a given user. The teleportation therefore allows for tweaking the preferences of the user, providing different rankings to a user with a different personalization vector. The bound \ref{eq-necessary} in this case tells us that the ordering is robust: the ranking can't be completely altered by the choice of personalization.  

    \item Genetical or Protein-Protein Interaction networks: As explained in \cite{Gleich-pagerank-2015}, PageRank has been applied in a variety of biological networks \cite{2005-Morrison-GeneRank, 2007-Freschi-protein}. In these applications, the teleportation vector is designed to focus the search on specific areas of the network. Given that, as discussed in the aforementioned paper, the damping used in these applications is high ($\alpha \approx 0.8$), we can also to conclude that the ranking will also be robust with respect to changes in the personalization vector.

    \item Knowledge information systems: It is also discussed in \cite{Gleich-pagerank-2015} that PageRank has found a way to be used in semi-supervised learning tasks (for instance, in graphs where each node is an image and two are connected if they share a caption label \cite{2004-Pan-automatic}). These studies present a different phenomenology to the previously discussed cases, as they employ very low values of $\alpha\approx 0.1$. Although we have no datasets of this type, our results force us to conclude that it is quite likely that the rankings obtained will be highly dependent on the personalization vector used.  

    \item There are plenty other applications (see \cite{Gleich-pagerank-2015}), most of them using high $\alpha$. We can draw similar conclusions to the previous cases, the robustness of the ranking.
    
\end{itemize}
Some of the datasets used in Figure \ref{fig:real-datasets-alpha} fall into these or other categories. The specific data used in each of them can be found in our GitHub repository (which can be found in the Data Availability section of this manuscript), and the information about each dataset is in \cite{KONECT-datasets, CASOS-datasets}.

\section{Biplex PageRank}\label{sec4:biplex}

In \cite{Pedroche-biplex-2016} a novel version of the PageRank vector was put forward by establishing an analogy between the standard PageRank algorithm and a random surfer on a ``virtual'' biplex network, constructed from the initial graph $G$.

Although we will not discuss it here, this algorithm was shown to be useful in order to extend the notion of PageRank centralities to multiplex networks. Multiplex networks are networks where the interactions between nodes fall into different categories \cite{Boccaletti-structure-2014}. Hence, they can be represented as different layers, each containing the same nodes but with a particular set of connections. Standard complex network algorithms (such as centrality measures, community detection algorithms, and others) need to be extended to account for these more intricate structures, and the biplex PageRank~\cite{Pedroche-biplex-2016} is one of such proposals. 

Nevertheless, the application of this algorithm to monoplex networks provides yet another extension of PageRank, which can serve as a playground for novel ideas related to centrality. In our case, it will be clear that the geometrical solution to the ranking problem described in Section~\ref{sec:ranking} is not restricted to just the vanilla PageRank algorithm: it can serve as a guiding principle in more complicated, although related, measures.

In the biplex PageRank algorithm, the auxiliar biplex network considered consists of two layers: one with the actual edge connections between the $n$ nodes (this layer essentially accounts for the teleportation-less random walk), while the other contains a fully connected graph between them (this is the ``teleportation layer''). The biased random walker with teleportation then chooses, at each step, whether to follow the links in the usual transition layer, or the teleportation layer.

This construction led to the following definition:
\begin{definition}[Biplex PageRank centrality~\cite{Pedroche-biplex-2016}]
Let $G$ be a graph with no dangling nodes, with transition matrix $P$. Let $\bm{v}$ be a positive, unit norm vector and $\beta\in(0,1)$. Then, the biplex PageRank vector of $G$ with damping factor $\beta$ and personalization vector $\bm{v}$ is the vector
\begin{equation}\label{eq-BiplexPageRank}
    \bm{\pi}_{\rm BPR} = \bm{\pi}_u + \bm{\pi}_d ,
\end{equation}
where $[\bm{\pi}^T_u, \bm{\pi}^T_d]\in\R^{2n}$ is the only positive, unit norm eigenvector of 
\begin{equation}\label{eq_matrixBPR}
    M_{\rm BPR} = \begin{pmatrix}
    \beta P & (1-\beta)I_n \\
    \beta I_n & (1-\beta)\bm{e} \bm{v}^T
    \end{pmatrix}.
\end{equation}
\end{definition}

Note that $\pi_u$ corresponds to the centrality of the nodes in the transition layer, while $\pi_d$ corresponds to the centrality of the nodes in the teleportation layer.

It is remarkable to point out that existence and uniqueness of the Biplex PageRank centrality are granted by the Perron-Frobenius Theorem. This alternative version of the biased walker leads to a different centrality measure, whose technical details we will skip, only keeping the necessary ones and referring the interested reader to~\cite{Pedroche-biplex-2016} for them.

Vectors $\bm{\pi}_u$ and $\bm{\pi}_d$ satisfy the following relations, $\bm{\pi}_u \bm{e} = \beta$ and $\bm{\pi}_{\color{red} d} \bm{e} = 1-\beta$. 

Later in \cite{Pedroche-sharp-2018} a closed form formula for the Biplex PageRank vector, in resemblance to formula \eqref{eq-boldi}, was found as
\begin{equation}\label{eq-piBiplex}
    \bm{\pi}_{\rm BPR}^T = (1-\beta)^2 \bm{v}^T \left(\beta I_n + Y\right)\,Z^{-1},
\end{equation}
where $Y=I_n - \beta P$,  $Z = \gamma I_n - \beta P$ and $\gamma=1-\beta(1-\beta)$. It is straightforward to check that $(\beta I_n+Y)$ is invertible in the $\beta\in(0,1)$ range, so we also have the formula
\begin{equation}\label{eq-vBiplex}
    \bm{v}^T = \frac{1}{(1-\beta)^2}\bm{\pi}_{\rm BPR}^T\,Z\left(\beta I_n + Y\right)^{-1}.
\end{equation}
With this, we can state the following theorem that characterizes when a personalization vector exists for a prescribed biplex PageRank centrality.

\begin{theorem}[Existence of the personalization vector - biplex case]\label{prop-existencevBiplex}
Given a graph $G$ and a positive, unit norm $\bm{\pi}_{BPR}$, then there exists a positive, unit norm personalization vector $\bm{v}$ such that $\bm{\pi}_{0}$ is the biplex PageRank vector if and only if $\bm{\pi}_{BPR}^T \bm{e}_j > \beta \bm{\pi}_{BPR}^T \mathcal{P} \bm{e}_j$ for all $j$, where $\mathcal{P}=(2-\beta)(\beta I_n + Y)^{-1}$.
\end{theorem}

\begin{proof}
First we prove that \eqref{eq-vBiplex} leads to unit-norm personalization vectors. Note that $P^n \bm{e}=\bm{e}$ due to row-stochasticity, therefore if we use the resolvent expansion 
\[
(\beta I_n+Y)^{-1}=\frac{1}{\beta+1}\sum_{m=0}^\infty\left(\frac{\beta}{1+\beta} P\right)^m,
\]
we end up with 
\begin{align*}
    |\bm{v}|_1 &=\bm{v}^T \bm{e} = \frac{1}{(1-\beta)^2}\bm{\pi}_{\rm BPR}^T\,Z\left(\beta I_n + Y\right)^{-1} \bm{e} \\
    &= \frac{1}{(1-\beta)^2}\bm{\pi}_{\rm BPR}^T\,Z \frac{1}{\beta+1}\sum_{m=0}^\infty \left(\frac{\beta}{1+\beta} P \right)^m e \\
    & = \frac{1}{(1-\beta)^2}\bm{\pi}_{\rm BPR}^T\,(\gamma I_n - \beta P) \bm{e} \frac{1}{\beta+1}\sum_{m=0}^\infty \left(\frac{\beta}{1+\beta}\right)^m \\
    &= \frac{1}{(1-\beta)^2}\bm{\pi}_{\rm BPR}^T \, (1 - \beta)^{\color{red}2}\bm{e} = \bm{\pi}_{\rm BPR}^T \bm{e} = 1.
\end{align*}

We now require that all components of the required personalization vector are positive, 
\begin{align}
    v_j &= \bm{v}^T \bm{e}_j = \frac{1}{(1-\beta)^2}\bm{\pi}_{\rm BPR}^T Z (\beta I_n+Y)^{-1} \bm{e}_j > 0.
\end{align}

It will now be convenient expanding the $Z(\beta I_n+Y)^{-1}$ expression in with the previously mentioned resolvent series, multiplying and re-summing. Doing so we find
\begin{align*}
    & Z(\beta I_n+Y)^{-1} = \frac{1}{\beta+1}(\gamma I_n - \beta P) \sum_{m=0} \left(\frac{\beta}{\beta+1}\right)^m P^m \\ 
    &= \frac{1}{\beta+1}\sum_{m=0} \left[\gamma\left(\frac{\beta}{\beta+1}\right)^m P^m - \beta\left(\frac{\beta}{\beta+1}\right)^m P^{m+1} \right] \\
    &= \frac{1}{\beta+1}\left[\gamma I_n - \beta(\beta-2)I_n + \beta(\beta-2) \sum_{m=0}^\infty \left(\frac{\beta}{\beta+1}\right)^{m} P^{m} \right] \\
    &= I_n + \beta(\beta-2)(\beta + Y)^{-1}.
\end{align*}

Plugging this in the above equation we find the condition
\begin{equation}
    \left[\bm{\pi}_{\rm BPR} + \beta(\beta-2)\bm{\pi}_{\rm BPR}(\beta I_n + Y)^{-1}\right]\bm{e}_j > 0
\end{equation}
which, with the identification $\mathcal{P}=(2-\beta)(\beta I_n + Y)^{-1}$ concludes the proof. 
\end{proof}

Again, by using the geometric approach proposed in Section~\ref{sec:ranking} we can interpret the biplex PageRank vector as the linear map between simplices \eqref{eq-simplex}
\begin{align}\label{eq-BPRmap}
    \bm{\pi}_{\rm BPR}(G, \beta, \cdot):&\quad  \Delta_n \longrightarrow \Delta_n \nonumber \\
                          &\quad\,\, \bm{v} \lhook\joinrel\longrightarrow \bm{\pi}_{\rm BPR}(G, \beta, \bm{v})
\end{align}

This map is again injective and linear in $\bm{v}$, and consequently allows us to employ the same kind of argument for the existence of ranking controllability
\begin{equation}\label{eq-rankingcenter-biplex}
\bm{e}_0 = \frac{1}{n} \bm{e} \in \text{Im}(\bm{\pi}_{\rm BPR}),\quad \bm{e}_0 = \frac{1}{n} \bm{e} \notin \partial\text{Im}(\bm{\pi}_{\rm BPR}).
\end{equation}

It is straightforward to find an analytic characterization of ranking control in the biplex PageRank case in terms of the relationship between $\beta$ and the column sums of matrix $P$, simply by following the same reasoning used in the standard PageRank setting. In fact, following similar arguments that those used in the proof of Theorem~\ref{thm:ranking} it can be easily proved the following result:

\begin{theorem}[Characterization of biplex ranking control]\label{thm:biplex}
Given a graph $G$ and a damping factor $\beta=(0,1)$, then it is possible to obtain any ranking of the nodes under the biplex PageRank if and only if
\begin{equation}\label{eq-necessaryBiplex}
    \frac{1}{\beta} > \max_{j}  \left(\sum_{i=1}^N  \mathcal{P}_{ij} \right).
\end{equation}
\end{theorem}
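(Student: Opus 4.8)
The plan is to mirror the proof of Theorem~\ref{thm:ranking} almost verbatim, substituting the biplex existence criterion of Theorem~\ref{prop-existencevBiplex} for the monoplex one of Theorem~\ref{prop-existencev}. The geometric scaffolding carries over unchanged: since $\pi_{\rm BPR}(G,\beta,\cdot)$ is linear and injective on $\Delta_n$ (as recorded just above \eqref{eq-rankingcenter-biplex}), its image is a polytope, and by the bisecting-hyperplane argument of Section~\ref{sec:ranking} every one of the $n!$ ranking regions is attainable if and only if the barycenter $\tilde e=\frac1n e$ lies in the relative interior of $\mathrm{Im}(\pi_{\rm BPR})$. This is precisely condition \eqref{eq-rankingcenter-biplex}.

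First I would reduce full ranking control to a single instance of the existence problem. Because the map is linear and injective, it sends the relative interior of $\Delta_n$ — the strictly positive stochastic vectors — onto the relative interior of the image polytope. Hence $\tilde e$ lies in the interior of $\mathrm{Im}(\pi_{\rm BPR})$ if and only if there is a strictly positive $v$ with $|v|_1=1$ and $\pi_{\rm BPR}(G,\beta,v)=\tilde e$, which is exactly the hypothesis of Theorem~\ref{prop-existencevBiplex} specialised to $\pi_0=\tilde e$.

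Next I would substitute $\pi_0=\tilde e$ into that criterion. Using $\braket{\tilde e,e_j}=\tfrac1n$ and $\tilde e^T\mathcal{P}e_j=\tfrac1n\sum_{i=1}^n\mathcal{P}_{ij}$, the condition $\braket{\pi_0,e_j}>\beta\,\pi_0^T\mathcal{P}e_j$ reads
\begin{equation}
\frac1n>\frac{\beta}{n}\sum_{i=1}^n\mathcal{P}_{ij}\qquad\text{for every }1\le j\le n,
\end{equation}
which rearranges to $\tfrac1\beta>\sum_i\mathcal{P}_{ij}$ for all $j$, i.e.\ the claimed inequality $\tfrac1\beta>\max_j\sum_i\mathcal{P}_{ij}$ of \eqref{eq-necessaryBiplex}.

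The hard part will not be any computation — the substitution above is immediate — but rather making the geometric equivalence airtight, in particular the boundary case in which $\tilde e$ sits exactly on $\partial\,\mathrm{Im}(\pi_{\rm BPR})$. There the criterion degenerates to an equality and some ranking is attainable only in a limiting sense; the strict inequality built into Theorem~\ref{prop-existencevBiplex} excludes this case automatically, so the ``if and only if'' is clean, exactly as in the monoplex setting (cf.\ the remark following Theorem~\ref{thm:ranking}). The one point genuinely requiring care is therefore the claim that interiority of $\tilde e$ is equivalent to attainability of all $n!$ rankings, which holds because every bisecting hyperplane $\mathcal{H}_n^{i,j}$ passes through $\tilde e$, so any neighbourhood of $\tilde e$ contained in the image meets every ranking chamber.
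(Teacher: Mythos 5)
Your proposal is correct and follows exactly the route the paper intends: the paper omits an explicit proof, stating only that the result follows ``by the same reasoning'' as Theorem~\ref{thm:ranking}, and your argument --- substituting $\pi_0=\tilde e$ into Theorem~\ref{prop-existencevBiplex} and invoking the barycenter/interiority criterion of \eqref{eq-rankingcenter-biplex} --- is precisely that reasoning, written out in full. Your additional care about the boundary case and the linear-injective image argument goes slightly beyond what the paper records, but is consistent with its remark following Theorem~\ref{thm:ranking}.
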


By using the definition of $\mathcal{P}$, the condition that appears in Theorem~\ref{thm:biplex} can be rewritten as follows:
\begin{equation}\label{eq-biplex-minalpha}
    \frac{1}{\beta} >  \max_{j}  \left( \sum_{i=1}^N \mathcal{P}_{ij} \right) = \left(\frac{2-\beta}{1+\beta}\right) \max_{j} \sum_{i=1}^N\left[\left(I_n - \frac{\beta}{1+\beta} P\right)^{-1}\right]_{ij},
\end{equation}
but we cannot expect a more simplified expression of the maximal $\beta$ in terms of $P_{ij}$, since matrix $\mathcal{P}$ depends itself on the damping factor, unlike what happened in the standard PageRank case.

\subsection{Comparison to the monoplex result}

After presenting the analytic result for the Biplex ranking problem (Theorem~\ref{thm:biplex}), we now turn to numerics in order to develop some understanding for this result. In particular, we are interested in how it compares to the usual PageRank, whether it is {\sl more controllable} or not, in terms of the maximal damping factor that allows full ranking control. A similar comparative analysis was performed in terms of the controllability based on the personalization vector between the (classic) PageRank and the Biplex PageRank by J.\,Flores et. al \cite{flores2020parametric}.

As we have pointed out before, Theorem~\ref{thm:ranking} shows that the value $\alpha_0$ introduced in equation\eqref{eq:ControlPR} is a measure of the controllability of the PageRank in $G$. Similarly, if we consider $\beta_0$ the maximal value that verifies equation~\eqref{eq-biplex-minalpha}, then it is also a measure of the controllability of the Biplex PageRank in $G$, since  the bigger $\beta_0$ is the wider range of damping factors allow Biplex Ranking control of PageRank in $G$.  A numerical comparison between $\alpha_0$ and $\beta_0$ for the same real network datasets used in the standard PageRank case (all fetch from the KONECT network repository \cite{KONECT-datasets} and the CASOS network repository \cite{CASOS-datasets}) is presented in Figure~\ref{fig:biplex_vs_PR}. Note that, for most cases the maximal value of the damping factor  $\alpha_0$ is smaller than the corresponding maximal $\beta_0$ for the Biplex PageRank, so we see that in these cases Biplex PageRank is more controllable than (classic) PageRank, which is consistent with the results obtained in~\cite{flores2020parametric} for the controllability related with personalization vectors.

\begin{figure*}[ht!]
    \centering
    \includegraphics[width=1\textwidth]{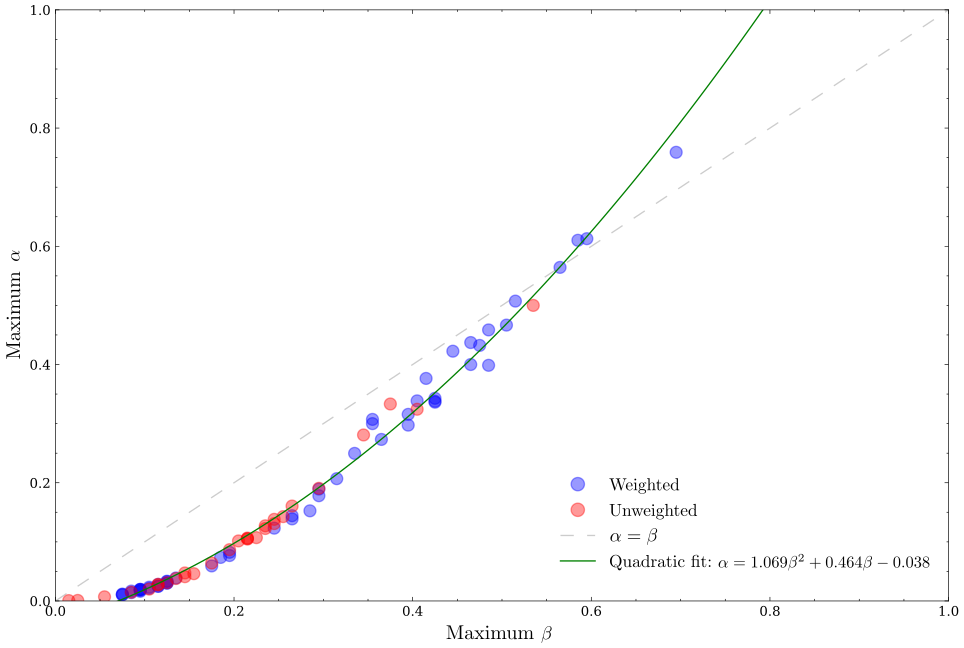}
    \caption{Comparison between maximum $\alpha$ and $\beta$ saturating their respective bounds in the cases of Standard and Biplex PageRank for 84 different real networks obtained from the KONECT network repository \cite{KONECT-datasets} and the CASOS network repository \cite{CASOS-datasets}. Red datapoints represents weighted networks and blue represents unweighted ones.}
    \label{fig:biplex_vs_PR}
\end{figure*}

It is also interesting to point out that, although the datasets come from very heterogeneous sources, there is a clear tendency in the data, following a curve which we found to be (via a quadratic polynomial fit) $y = 1.014 x^2 + 0.492 x - 0.041$. This is perhaps more surprising when we take into account that some of the sampled networks are weighted, yet the quadratic behaviour remains unchanged.

In order to delve deeper in this result, we will consider another batch of network data; this time synthetic networks. We have generated, with the aid of the NetworkX library in Python, two distinct sets of networks: some directed random networks (constructed in the same vein as the undirected Erdös-Renyi version) and some directed scale-free ones (constructed based on the procedure prescribed in \cite{Bollobas-directed-2003}). In both cases we generated networks with the number of nodes ranging from 100 to 20,000, with different edge creation probabilities (see the GitHub repository for the specific details of the implementation). We again compute their maximum values of $\alpha,\beta$ and plot one against the other, obtaining Figure~\ref{fig:BPR_vs_PR-SF_vs_ER}.

\newpage

\begin{figure*}[ht]
    \centering
    \includegraphics[width=1\textwidth]{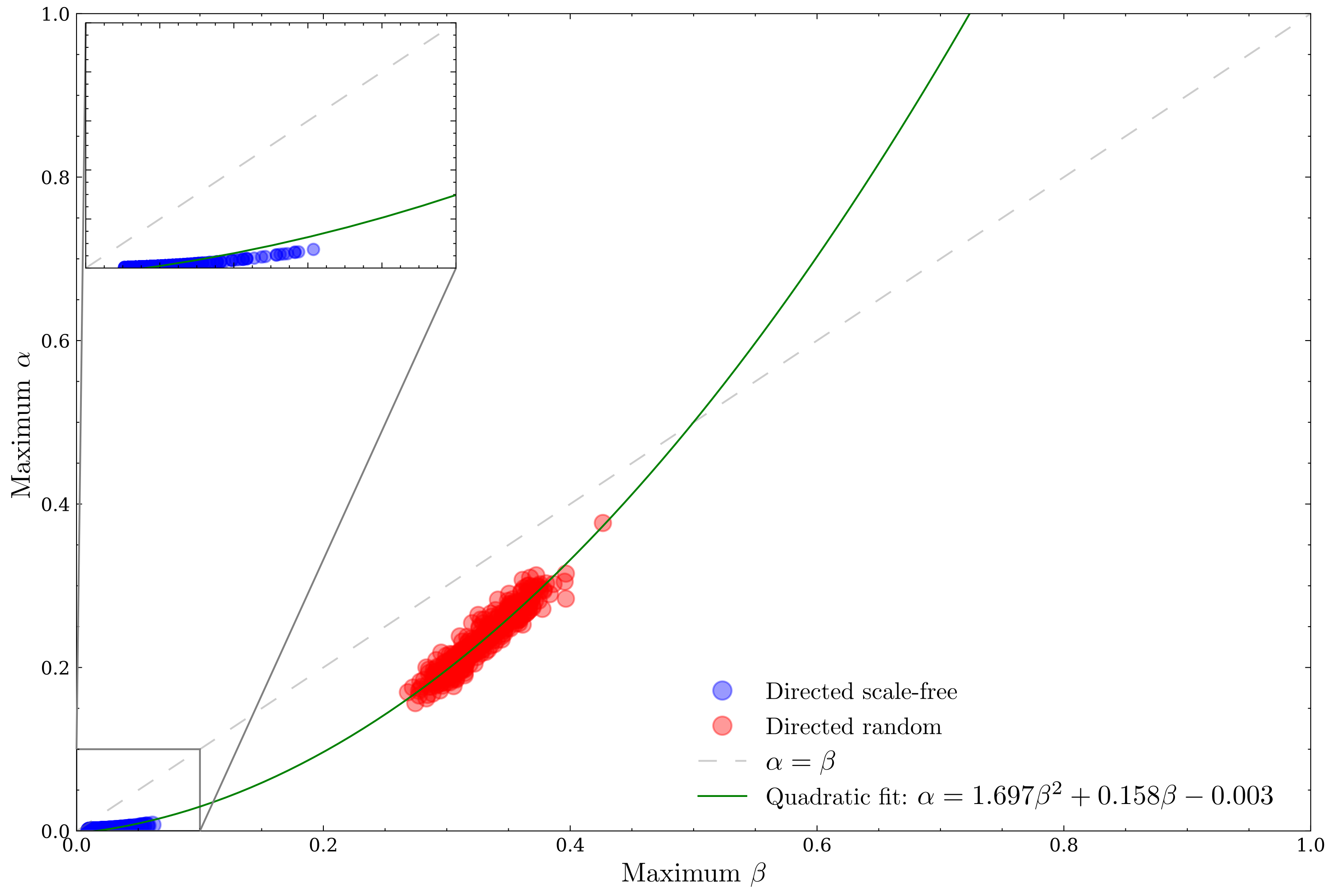}
    \caption{Comparison of maximum $\alpha$ and $\beta$ saturating their respective bounds in the cases of Standard and Biplex PageRank for synthetic networks. Here red datapoints represent random networks and blue represents scale-free ones.}
    \label{fig:BPR_vs_PR-SF_vs_ER}
\end{figure*}

We can clearly see that the polynomial fit is very similar to that of the real networks case, and we could expect them to match even better had we sampled more networks. Apart from that, it is quite noticeable that the distinct nature of the generated networks also separates them in their behavior with respect to both centrality measures. This was already hinted at in the previous section: the presence of high in-degree nodes pointed to by low out-degree ones is very common in scale-free networks, thus their maximum values of $\alpha$ and $\beta$ are specially low. For random networks all nodes have, on average, the same connectivity, thus they allow for more flexibility in ranking control. Another byproduct of the randomness in the corresponding synthetic networks is the higher spread in $\alpha$ for fixed $\beta$ compared to that of scale-free ones.

\section{Conclusions}\label{sec5:conclusions}

Our research has focused on the controllability of the PageRank algorithm as a centrality measure in complex networks. Through our study, we have concluded that full control through weight changes is impossible. Instead, we have investigated the conditions necessary to achieve full control through parametric changes, which involve modifying both the damping factor and the personalization vector. By shifting our focus to centrality rankings rather than centrality scores, we found that a less stringent requirement is sufficient for both standard PageRank and biplex PageRank. However, when we tested this condition on real or synthetic networks, we found it to be a challenging constraint. These findings offer further evidence of the stability of PageRank as an indexing tool.

\section*{Acknowledgements}
This work has been partially supported by projects PGC2018-101625-B-I00 (Spanish Ministry, AEI/FEDER, UE), M1993, M2978 and M3033 (URJC Grants). G.\,C-A. is funded by the URJC fellowship PREDOC-21-026-2164.

\section*{Data Availability Statement}

The data used for the numerical results presented in this work, as well as the code written to analyze and plot them, can be found in the repository  \par \href{https://github.com/LaComarca-Lab/PageRank_CentralityControl}{https://github.com/LaComarca-Lab/PageRank\_CentralityControl}.

\printbibliography

\end{document}